\newtheorem{theorem}{Theorem}
\newtheorem{Lemma}[theorem]{Lemma}
\newtheorem{Proposition}[theorem]{Proposition}
\newtheorem{remark}[theorem]{Remark}
\newtheorem{examples}[theorem]{Example}
\theoremstyle{definition}
\newtheorem{Definition}[theorem]{Definition}
\newtheorem{problem}{Problem}
\newcommand{\al}{\alpha}
\newcommand{\eps}{\epsilon}
\newcommand{\sr}{\mathcal S}
\newcommand{\R}{\mathbb R}
\newcommand{\N}{\mathbb N}
\newcommand{\Fo}{\mathbf{F}}
\newcommand{\Uo}{\mathbf{U}}
\newcommand{\Ro}{\mathbf{R}}
\newcommand{\Rd}{\mathrm R}
\newcommand{\reg}{r}
\newcommand\sabs[1]{{\lvert#1\rvert}}
\newcommand\norm[1]{\Vert#1\Vert}
\newcommand\set[1]{{\{#1\}}}
\newcommand\snorm[1]{\Vert#1\Vert}
\DeclareMathOperator*{\argmin}{arg\,min}
\DeclarePairedDelimiter{\abs}{\lvert}{\rvert}
\newcommand{\XX}{\mathbb X}
\newcommand{\sph}{\mathbb{S}}
\newcommand{\signal}{f}
\newcommand{\data}{y}
\newcommand{\ph}{\varphi}
\numberwithin{equation}{section}
\numberwithin{figure}{section}
\numberwithin{theorem}{section}
\newcommand{\astx}{\circledast}
\newcommand{\asts}{\circledast_s}
\title{Feature reconstruction from incomplete tomographic data without detour}
\author{Simon G\"oppel \thanks{Department of Mathematics, University of Innsbruck,
Technikerstrasse 13, 6020 Innsbruck, Austria, \{simon.goeppel, markus.haltmeier\}@uibk.ac.at} \and J\"urgen Frikel \thanks{Department of Computer Science and Mathematics, Galgenbergstra{\ss}e 32, D-93053 Regensburg, Germany, juergen.frikel@oth-regensburg.de}\and Markus Haltmeier\footnotemark[1]  }
\begin{document}

\maketitle

\begin{abstract}
In this paper, we consider the problem of feature reconstruction from incomplete x-ray CT data. Such problems occurs, e.g., as a result of dose reduction in the context medical imaging. Since image reconstruction from incomplete data is a severely ill-posed problem, the reconstructed images may suffer from characteristic artefacts or missing features, and significantly complicate subsequent image processing tasks (e.g., edge detection or segmentation). In this paper, we introduce a novel framework for the robust reconstruction of convolutional image features directly from CT data, without the need of computing a reconstruction firs. Within our framework we use non-linear (variational) regularization methods that can be adapted to a variety of feature reconstruction tasks and to several limited data situations . In our numerical experiments, we consider several instances of edge reconstructions from angularly undersampled data and show that our approach is able to reliably reconstruct feature maps in this case.

\noindent\textbf{keywords}
Computed tomography; Radon transform; reconstruction, limited data; sparse data, feature reconstruction, edge detection.

\end{abstract}

\section{Introduction}
\label{sec:intro}

Computed tomography (CT) has established itself as one of the standard diagnostic tools in medical imaging. However, the relatively high radiation dose that is used to produce high resolution CT images (and that patients are exposed to) has become one of the major clinical concerns \cite{CTDoseReduction,cancer,thousands,middleage}. The reduction of the radiation exposure of a patient while ensuring the diagnostic image quality constitutes one of the main challenges in CT. Beside patient safety, also the reduction of scanning times and costs  constitute important aspects of dose reduction, which is often achieved by reducing the x-ray energy level (leading to higher noise levels in the data) or by reducing the number of collected CT data (leading to incomplete data), cf. \cite{CTDoseReduction}. 

In this work, we particularly consider incomplete data situations, e.g., that arise in a sparse or limited view setup, where CT data is collected only with respect to a small number of x-rays directions or within a small angular range. The intentional reduction of the angular sampling rate leads to an under-determined and severely ill-posed image reconstruction problem, c.f. \cite{natterer}. As a consequence, the reconstructed image quality can be substantially degraded, e.g., by artefacts or missing features \cite{juergentodd}, and complicate subsequent image processing tasks (such as edge detection or segmentation) that are often employed within a CAD-pipeline (computer aided diagnosis). Therefore, the development of robust feature detection algorithms for CT that ensure the diagnostic image quality is an important and very challenging task. In this paper, we introduce a framework for feature reconstruction directly from incomplete tomographic data, which is in contrast to the classical 2-step approach where reconstruction and feature detection are performed in two separate steps.

\subsection*{Incomplete tomographic data}

In this article, we consider the parallel beam geometry and use the 2D Radon transform $\Ro f\colon \sph^1 \times \R \to \R$ as a model for the (full) CT data generation process, where $\sph^1$ denotes the unit sphere in $\R^2$ and $f \colon \R^2\to \R$ is a function representing the sought tomographic image (CT scan). Here, the value $\Ro f (\theta, s)$ represents one x-ray measurement over a line in $\R^2$ that is parametrized by the normal vector $\theta \in \sph^1$ and the signed distance from the origin $s \in \R$. In what follows, we consider incomplete data situations where the Radon data are available only for a small number of directions, given by $\Theta  \coloneqq \{ \theta_1, \dots, \theta_m \}$. We denote the angularly sampled tomographic Radon data by $\Ro_\Theta  f \coloneqq (\Ro  f)|_{\Theta \times \R}$. In this context, the (semi-discrete) CT data $\Ro_\Theta  f$ will be called incomplete if the Radon transform is insufficiently sampled with respect to the directional variable. Prominent instances of incomplete data situations are: \emph{sparse angle setup}, where the directions in $\Theta$ are sparsely distributed over the full angular range  $[0,\pi]$; \emph{limited view setup}, where $\Theta$ covers only small part of the full angular range $[0,\pi]$. Precise mathematical criteria of (in-) sufficient sampling can be derived from the Shannon sampling theory. Those criteria are based on the relation between the number of directions $m=|\Theta|$ and the bandwidth of $f$, cf. \cite{natterer}. In this work, we will mainly focus on the sparse angle case, with uniformly distributed directions $\theta_1, \dots, \theta_m$ on a half-sphere, e.g., directions $\theta_k\coloneqq\theta(\varphi_k)=(\cos(\varphi_k),\sin(\varphi_k))^\top$ with uniformly distributed angles  $\varphi_k\in[0, \pi]$.

\subsection*{Feature reconstruction in tomography}

In the following, we consider image features that can be extracted from a CT scan $\signal \in L^2(\R^2)$ by a convolution with a kernel $U\in L^1(\R^2)$. In this context, the notion of a feature map will refer to the convolution product $ \signal  \astx U$, and the convolution kernel $U$ will be called feature extraction filter. Examples of feature detection tasks that can be realized by a convolution include edge  detection, image restoration, image enhancement, or texture filtering \cite{jain1989fundamentals}. For example, in case of edge detection, the filter $U$ can be chosen as a smooth approximation of differential operators, e.g., of the Laplacian operator \cite{jahne2005digital}. In our practical examples, we will mainly focus on edge detection in tomography. However, the proposed framework also applies to more general feature extraction tasks. 

In many standard imaging setups, image reconstruction and feature extraction are realized in two separate steps. However, as pointed out in \cite{louis2}, this 2-step approach can lead to unreliable feature maps since feature extraction algorithms have to account for inaccuracies that are present in the reconstruction. This is particularly true for the case of incomplete CT data as those reconstructions may contain artefacts. Hence, combining these two steps into an approach that computes feature maps directly from CT data can lead to a significant performance increase, as was already pointed out in \cite{louis1,louis2}. In this work, we account for this fact and extend the results of \cite{louis1,louis2} to more general setting and, in particular, to limited data situations.

\subsection*{Main contributions and related work}

In this work, we propose a framework  to directly reconstruct the feature map $  U \astx \signal$ from  the measured tomographic  data. Our approach is based on the  forward convolution identity for the Radon transform, that is $\Ro (\signal \astx U) = (\Ro  \signal) \asts (\Ro  U)$, where on the right hand side the convolution is taken with respect to the second variable of the Radon transform, cf. \cite{natterer}. This identity implies that, given (semi-discrete) CT data, the feature map satisfies the (discretized) equation $\Ro_\Theta h = \data_\Theta$, where $\data_\Theta=\Ro_\Theta f\asts\Ro_\Theta U$ is the modified (preprocessed) CT data. Therefore, the sought feature map can be formally computed by applying a discretized version of the inverse Radon transform to $\data_\Theta$, i.e., as $h_\Theta = \Ro_\Theta^{-1} (\data_\Theta)$. In case of full data (sufficient sampling), this can be accurately and efficiently computed by using the well known filtered backprojection (FBP) algorithm with the filter $\Ro_\Theta  U$. However, if the CT data is incomplete,  this approach would lead to unreliable feature maps since in such situations the FBP is known to produce inaccurate reconstruction results, cf. \cite{natterer,juergentodd}. 

In order to account for data incompleteness, we propose to replace the inverse $\Ro_\Theta^{-1}$ by a suitable regularization method for $\Ro_\Theta^{-1}$ that is also able to deal with undersampled data. More concretely, we propose to reconstruct the (discrete) feature map $h_\Theta$ by minimizing the following Tikhonov type functional: 
\[ 
    h_\Theta \in \argmin_h \frac{1}{2} \, \norm{\Ro_\Theta h - u_\Theta  \asts  \data_\Theta }^2 +  \reg( h ) \,. \]
This framework offers a flexible way for incorporating a-priori information about the feature map into the reconstruction and, in this way, to account for the missing data. For example, from
the theory of compressed sensing it is well known that sparsity can help to overcome the classical Nyquist-Shannon-Whittaker-Kotelnikov paradigm \cite{cs}. Hence, whenever the sought feature map is known to be sparse (e.g., in case of edge detection), sparse regularization techniques can be easily incorporated into this framework.  

Approaches that combine image reconstruction and edge detection have been proposed for the case of full tomographic data, e.g., in \cite{louis1,louis2}. Although the presented work follows the spirit of \cite{louis1,louis2}, it comes with several novelties and advantages. On a formal level, our approach is based on the forward convolution identity, in contrast to the dual convolution identity, given by  $(\Ro^\ast u) \astx \signal = \Ro^\ast ( u \asts \Ro \signal)$, that is employed in in \cite{louis1,louis2}. The latter requires full  (properly sampled) data, since the backprojection operator $\Ro^\ast$ integrates over the full angular range (requiring proper sampling in the angular variable). In contrast, our framework is applicable to incomplete Radon data situations, since the forward convolutional identity (used in our approach) can be applied to more general situations. Moreover, in  order to recover the feature map $U \astx \signal$, we use non-linear regularization methods that can be adapted to a variety of situations and incorporate different kinds of prior information. From this perspective, our approach also offers more flexibility. A similar approach was presented in our recent proceedings article \cite{frikel2021combining}, where the main focus was on the stable recovery of the image gradient from CT data and its application to Canny edge detection. Following the ideas of \cite{louis1,louis2} similar feature detection methods were developed also for other types of tomography problems, e.g., in \cite{Hahn_2013,Rigaud_2015,Rigaud2017}. Besides that, we are not aware of any further results concerning convolutional feature reconstruction from incomplete x-ray CT data. 

Combinations of reconstruction and segmentation have also been presented in the literature for different types of tomography problems, e.g., in \cite{elangovan2001sinograms,klann2011mumford,storath2015joint,Burger_2016,Romanov2016,LQWJ2018,JoinRecSeg2018}. As a commonality to our approach, many of those methods are based on the minimization of an energy functional of the form $\norm{\Ro_\Theta \signal - \data}^2 +  r(\signal\ast \Uo)$, incorporating feature maps as prior information. Important examples include Mumford-Shah like  approaches  \cite{klann2011mumford,Burger_2016,JoinRecSeg2018,LQWJ2018} or the Potts model \cite{storath2015joint}.  Also,  geometric approaches for computing segmentation  masks directly from tomographic data were employed in \cite{elangovan2001sinograms}.

\subsection*{Outline}

In Section \ref{sec:background} we provide some basic facts about the Radon transform, sampling and sparse recovery. In Section \ref{sec:feature} we introduce the proposed feature reconstruction framework and present several examples of convolutional feature reconstruction filters along with corresponding data filters, mainly focusing on the case of edge detection. Experimental results will be presented in Section \ref{sec:num}. We conclude with a summary and outlook given  in Section \ref{sec:outlook}.

\section{Materials and Methods}
\label{sec:background}

In this section, we recall some basic facts about the 2D Radon transform, including important identities and sampling conditions. In particular, we define the sub-sampled Radon transform that will be used throughout this article. Although, our presentation is restricted to the 2D case (because this makes the presentations more concise and clear), the presented concepts can be easily generalized to the $d$-dimensional setup.

\subsection{The Radon transform}

Let  $\sr(\R^2)$ denote the Schwartz space on $\R^2$ (space of smooth  functions that are rapidly decaying together with all their derivatives) and $\sr(\sph^{1} \times \R)$ the Schwartz space over $\sph^{1} \times \R$ as the space of all smooth functions that are rapidly decaying together with all their derivatives in the second component, cf. \cite{natterer}. We consider the Radon transform as an operator between those Schwartz spaces, $\Ro \colon \sr(\R^2) \to \sr(\sph^{1} \times \R)$, which is defined via

\begin{equation}\label{eq:radon}
   \Ro f(\theta, s) \coloneqq  \int_{-\infty}^\infty f(s \theta + t\theta^\bot) \mathrm{d} t,
\end{equation}

where $s\in\R$, $\theta\in\sph^1$ and $\theta^\bot$ denotes the rotated version of $\theta$ by $\pi/2$ counterclockwise (in particular, $\theta^\bot$ is a unit vector  perpendicular to $\theta$). The value $\Ro f(\theta, s)$ represents one x-ray measurement along the x-ray path that is given by the line $L(\theta,s)=\{x\in\R^2:\langle x,\theta\rangle=s\}$. Since $L(-\theta,-s)=L(\theta,s)$, the following symmetry property holds for the Radon transform, $\Ro f(-\phi,-s)=\Ro f(\theta,s)$. Hence, it is sufficient to know the values of Radon transform only on a half sphere, e.g., on the upper half sphere. Such data is therefore considered to be complete. The dual transform (backprojection operator) is defined as $ \Ro^\ast:\sr(\sph^{1} \times \R) \to \sr(\R^2)$, 
\begin{equation}\label{eq:backprojection}
   \Ro^\ast g(x) \coloneqq  \int_{\sph^1} g(\theta,\theta\cdot x) \mathrm{d} \theta.
\end{equation}

The  Radon transform is a well defined linear and injective operator, and several  analytic properties are well-known. One of the most important properties is the so-called Fourier slice theorem that describes the relation between the Radon and the Fourier transforms. In order to state this relation, we first recall that the Fourier transform is defined as $\Fo: \sr(\R^d)\to\sr(\R^d)$,  $\Fo \signal ( \xi ) \coloneqq  (2\pi)^{-d/2}=\int_{\R^d } \signal( x) e^{-  i x \cdot \xi} \,  \mathrm{d} x $ for $d\in\N$. Whenever convenient, we will also use the abbreviated notation $\hat f(\xi):=\Fo f(\xi)$. The Fourier transform is a linear isomorphism on the Schwartz space $\sr(\R^d)$ and its inverse is given by $\check f(x):=\Fo^{-1} \signal ( x ) =  (2\pi)^{-d/2}\int_{\R^2 } \signal(\xi) e^{ i x \cdot \xi} \,  \mathrm{d} \xi $. In what follows, we will denote the convolution of two functions $\signal, g:\R^d\to\R$ by $\signal \astx g (x) := \int_{\R^d} f(x-y)g(y) \mathrm{d} y$, where $d\in\N$. Moreover, for functions $g\in\sr(\sph^1\times \R)$, the Fourier transform $\Fo_s g$ will refer to the 1D-Fourier transform of $g$ with respect to the second variable. Analogeously, $g\asts h$ will denote the convolution of $g,h:\sph^1\times \R\to \R$ with respect to the second variable.

\begin{Lemma}[Properties of the Radon transform] \label{lem:radon}\mbox{}
\begin{enumerate}[label=(R\arabic*), leftmargin=3em]
\item\label{r1} \emph{Fourier slice theorem:} $\forall \signal \in \sr(\R^2) \; \forall (\theta, s) \in  \sph^{1} \times \R \colon \Fo_s \Ro f (\theta, \sigma) = \sqrt{2\pi} \cdot\Fo f (\theta \sigma)$.

\item\label{r2}  \emph{Convolution identity:} $\forall U,\signal \in \sr(\R^2) \colon \Ro (\signal \astx U) = \Ro \signal \asts \Ro U$.

\item\label{r3}  \emph{Dual convolution identity:} 
$\forall u \in \sr(\sph^{1} \times \R) \; \forall \signal  \in \sr(\R^2)    \colon 
\Ro^* u \astx f = \Ro^\ast (u \asts \Ro f)$.

\item\label{r4}  \emph{Intertwining with Derivatives:} 
$\forall \al \in \N^2 \; \forall \signal  \in \sr(\R^2) 
\colon  \Ro \partial_x^\alpha  \signal = \theta^\alpha 
\partial_s^{|\alpha|} \Ro   \signal$

\item\label{r5}  \emph{Intertwining with Laplacian:}
$ \forall \signal  \in \sr(\R^2) 
\colon  \Ro \Delta_x  \signal =  
\partial_s^2 \Ro   \signal$.
\end{enumerate}
\end{Lemma}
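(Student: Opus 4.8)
The plan is to obtain (R1) directly from the definitions and then to deduce (R2), (R4) and (R5) as formal consequences of (R1) together with the elementary multiplication, convolution and differentiation rules for the Fourier transform; the dual identity (R3) I would prove separately by a direct computation with Fubini's theorem. Throughout, every interchange of integrals and every change of variables below is legitimate because Schwartz functions decay rapidly and $\Ro$ maps $\sr(\R^2)$ into $\sr(\sph^{1}\times\R)$, so all integrands are absolutely integrable.

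For (R1), I start from $\Fo_s\Ro f(\theta,\sigma) = (2\pi)^{-1/2}\int_\R \Ro f(\theta,s)\,e^{-is\sigma}\,\mathrm{d}s$, insert the definition \eqref{eq:radon}, and for fixed $\theta\in\sph^{1}$ change variables from $(s,t)$ to $x = s\theta + t\theta^\bot\in\R^2$. Since $\{\theta,\theta^\bot\}$ is an orthonormal basis of $\R^2$, this substitution has unit Jacobian and satisfies $s = x\cdot\theta$, so after Fubini the iterated integral collapses to $(2\pi)^{-1/2}\int_{\R^2} f(x)\,e^{-ix\cdot(\sigma\theta)}\,\mathrm{d}x$, which is exactly $\sqrt{2\pi}\,\Fo f(\sigma\theta)$ once the factor $(2\pi)^{-1}$ in the two-dimensional transform $\Fo$ is taken into account.

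For (R2), (R4) and (R5) I apply $\Fo_s$ to both sides of the claimed identity and invoke (R1). For (R2): the left-hand side is $\Fo_s\Ro(f\astx U)(\theta,\sigma) = \sqrt{2\pi}\,\Fo(f\astx U)(\sigma\theta) = (2\pi)^{3/2}\,\Fo f(\sigma\theta)\,\Fo U(\sigma\theta)$ by the two-dimensional convolution theorem, and the right-hand side is $\Fo_s(\Ro f\asts\Ro U)(\theta,\sigma) = \sqrt{2\pi}\,\Fo_s\Ro f(\theta,\sigma)\,\Fo_s\Ro U(\theta,\sigma) = (2\pi)^{3/2}\,\Fo f(\sigma\theta)\,\Fo U(\sigma\theta)$ by the one-dimensional convolution theorem and (R1) applied twice; equality of the two sides together with the injectivity of $\Fo_s$ on $\sr(\sph^{1}\times\R)$ yields (R2). (Alternatively, (R2) admits a direct proof by inserting \eqref{eq:radon} on both sides and using the same orthogonal change of variables together with a translation in $t$.) Identity (R4) follows by the same scheme from $\Fo(\partial_x^\alpha f)(\xi) = (i\xi)^\alpha\Fo f(\xi)$, $\Fo_s(\partial_s^{|\alpha|}g) = (i\sigma)^{|\alpha|}\Fo_s g$ and the elementary observation $(i\sigma\theta)^\alpha = (i\sigma)^{|\alpha|}\theta^\alpha$. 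Finally (R5) is the sum of the cases $\alpha=(2,0)$ and $\alpha=(0,2)$ in (R4), using $\theta_1^2 + \theta_2^2 = |\theta|^2 = 1$ for $\theta\in\sph^{1}$.

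For (R3), I would compute the left-hand side directly: inserting \eqref{eq:backprojection} and the definition of $\astx$ and exchanging the order of integration gives $(\Ro^\ast u\astx f)(x) = \int_{\sph^{1}}\int_{\R^2} u(\theta,\theta\cdot x - \theta\cdot y)\,f(y)\,\mathrm{d}y\,\mathrm{d}\theta$; for fixed $\theta$ I decompose $y = s'\theta + t\theta^\bot$ (unit Jacobian, $s' = \theta\cdot y$), the integration in $t$ turns $f$ into $\Ro f(\theta,s')$, and what remains is $\int_\R u(\theta,\theta\cdot x - s')\,\Ro f(\theta,s')\,\mathrm{d}s' = (u\asts\Ro f)(\theta,\theta\cdot x)$, whose backprojection over $\sph^{1}$ is $\Ro^\ast(u\asts\Ro f)(x)$ by \eqref{eq:backprojection}. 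I expect no conceptual obstacle anywhere: each property reduces to a short sequence of changes of variables and standard Fourier identities, and the only genuine care is in justifying the Fubini interchanges (immediate from Schwartz decay) and in tracking the powers of $2\pi$ so that the factor $\sqrt{2\pi}$ in (R1)--(R2) comes out correctly, while (R3)--(R5) are constant-free.
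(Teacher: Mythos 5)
Your proof is correct, and it is essentially the argument the paper delegates to the literature: the paper's proof consists solely of the citation to Natterer, Chapter II, where (R1) is obtained by exactly your orthogonal change of variables, and the remaining identities follow by the slice theorem or a direct Fubini computation as you describe. Your bookkeeping of the $2\pi$-factors is consistent with the paper's normalization $\Fo f(\xi)=(2\pi)^{-d/2}\int f(x)e^{-ix\cdot\xi}\,\mathrm{d}x$ (note $(2\pi)^{3/2}$ on both sides of (R2)), so nothing is missing.
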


\begin{proof}
All identities are derived in  \cite[Chapter II]{natterer}.
\end{proof}

The  approach that we are going to present in Section \ref{sec:feature} is based on the convolution identity and can be formulated for arbitrary  spatial dimension $d \geq 2$. For the sake of clarity we consider two spatial dimensions $d=2$ in this paper. In this case, we will use the parametrization of $\sph^1$ given by $\theta(\varphi) \coloneqq  (\cos(\varphi),\sin(\varphi))^\top$ with $\varphi \in [0,\pi)$. Then $\theta^\bot(\varphi)=(-\sin(\varphi),\cos(\varphi)^\top$.  For the Radon transform we will  (with some abuse of notation) write   \[\Ro f (\varphi, s):=\Ro f (\theta(\varphi), s).\]


\subsection{Sampling the Radon transform}

Since in practice one has to deal with discrete data, we are forced to work with discretized (sampled) versions of the Radon transform. In this context, questions about proper sampling arise. In order to understand what it means for the CT data to be complete (properly sampled) or incomplete (improperly sampled), we recall some basic facts from the Shannon sampling theory for the Radon transform for the case of parallel scanning geometry (see for example \cite[Section III]{natterer}).

In what follows, we assume that $\signal$ is compactly supported on the unit disc $D\subseteq\R^2$ and consider sampled CT data $\Ro f(\varphi_j,s_l)$ with $N_\ph\in\N$ equispaced angles $\varphi_j$ in $[0,\pi)$ and $N_s$ equispaced values $s_l$ in $[-1,1]$ for the $s$-variable, i.e.,
\begin{equation}\label{eq:sampling}
(\varphi_j, s_\ell) 
= \biggl( \frac{j\pi}{ N_\ph}, \frac{\ell}{N_s}\biggr) \quad 
 \text{ for }    (j, \ell) \in \{0,\ldots,  N_\ph-1 \} \times   \{ - N_s,\ldots, N_s \} 
\,.
\end{equation}
For  given sampling points  \eqref{eq:sampling}  and a finite dimensional subspace $\XX_0\subseteq  \sr (\R^d)$ we define the \emph{discrete Radon transform} as
\begin{equation}\label{eq:radon-discrete}
\Rd \colon \XX_0 \to  \R^{N_\ph  \times (2 N_s+1)}
\colon  \signal \mapsto  (  \Ro \signal (\theta_j,s_\ell) )_{j,\ell} \,.
\end{equation}
The basic question of classical sampling theory in the context of CT is to find  conditions on the class of images $\signal \in \XX_0$ and on the sampling points under which the sampled data $\Rd \signal$ uniquely determines the unknown function $\signal$.  Sampling theory for  CT has been studied, for example,  in \cite{desbat1993efficient,Far04,Far06,natterer95sampling,RatLin81}.
While the classical sampling theory (e.g., in the setting of classical signal processing) works with the class of band-limited functions, the sampling conditions in the context of CT are typically derived for the class of essentially band-limited function.

\begin{remark}[Band-limited and essentially band-limited functions]
A function $\signal \in L^2(\R^2)$ is called $b$-band-limited if  its Fourier transform  $\Fo \signal (\xi)$ vanishes for $\norm{\xi} >b$. A function $\signal$ is called essentially $b$-band-limited if $\hat{f}(\xi)$ is negligible for $\norm{\xi} \geq b$ in the sense that
$ \eps_0(f, b) \coloneqq \int_{\snorm{\xi} \geq b} \sabs {\Fo f(\xi)} d\xi$
is sufficiently small, see \cite{natterer}.  One reason for working with essentially band-limited functions in CT is that  functions with compact support cannot be strictly band-limited. However, the quantity $\eps_0(f, b)$ can become arbitrarily small  for functions with compact support.  
\end{remark}

The bandwidth $b$ is crucial for the correct sampling conditions and the calculation of appropriate filters.  If  $\XX_0$ consists of essentially $b$-band-limited functions that vanish outside the unit disc $D$, then the correct sampling conditions are given by \cite{natterer}
\begin{equation}\label{eq:sampling-cond}
	(N_\ph, N_s)  
	\coloneqq \bigl( \lceil  b \rceil , \lceil    b  /  \pi \rceil \bigr) \,.
\end{equation}
Obviously, as the bandwidth $b$ increases, the step sizes $\pi / N_\ph$ and $1/N_s$ have to decrease in order that \eqref{eq:sampling-cond} is  satisfied.  Thus, if the bandwidth $b$ is large, a large number measurements (roughly $2 b^2  / \pi$) have to be collected. As a consequence, for high resolution imaging the sampling conditions require a large number of measurements. Thus, in practical applications, high resolution imaging in CT also leads to large scanning times and to high doses of x-ray exposure. A classical approach for dose reduction consists in reduction of x-ray measurements. For example, this can be achieved by angular undersampling where Radon data is collected only for a relatively small number of directions $\Theta\subseteq \set{\theta_0, \dots, \theta_{N_\ph-1}}$.

\begin{Definition}[Sub-sampled Radon transform]
Let  $(N_\ph, N_s)$ be defied by  \eqref{eq:sampling-cond} and  let $\XX_0$  be the set of essentially $b$-band-limited functions that vanishes outside the unit disc $D$ (note that in that case the discrete Radon transform defined in \eqref{eq:radon-discrete} is correctly sampled). For $ \Theta \subseteq \set{\theta_0, \dots, \theta_{N_\ph-1}}$  we call
\begin{equation}\label{eq:radon-sub}
 \Rd_\Theta \colon \XX_0 \to  \R^{\abs{\Theta}  \times (2 N_s+1)}
\colon   f    \mapsto (\Rd f) |_{\Theta \times \{-N_s, \dots, N_s\} }
\end{equation}
the sub-sampled discrete Radon transform. We will also use the semi-discrete  Radon transform $\Ro_\Theta f : = (\Ro f)|_{\Theta  \times \R}$, where we only sample in the angular direction but not in the radial direction.
\end{Definition}

If we  perform actual undersampling, where the number of directions in $\Theta$ is much less than $N_\ph$,  then the linear equation  $ \Rd_\Theta \signal =  \data_\Theta $ will be is severely under-determined and its solution requires additional  prior information (e.g., sparsity of the feature map). 

 
\section{Feature reconstruction from incomplete data}
\label{sec:feature}

In this section, we present our approach for feature map reconstruction from incomplete data. For a given bandwidth $b$, we let $\XX_0$ denote the set of essentially $b$-band-limited functions that vanishes outside $D$. Furthermore, we assume that the set of directions $\set{\theta_0, \dots, \theta_{N_\ph-1}}$ is chosen according to the sampling conditions \eqref{eq:sampling-cond}. 

\begin{problem}[Feature reconstruction task] \label{pr:feature}
Let $ \Theta \subseteq \set{\theta_0, \dots, \theta_{N_\ph-1}}$ and let $\data_\Theta:\Theta\times\R \to \R$ be the noisy subsampled (semi-discrete) CT data with $\norm{\Ro_\Theta \signal - \data_\Theta} \leq \delta$, where $\signal \in \XX_0$ is the true but unknown image and  $\delta > 0$ the known noise level. Given a feature extraction filter $U \colon \R^2 \to \R$,  our goal is to estimate the feature map $U \astx \signal$ from the (undersampled) data $y_\Theta$.
\end{problem}

\begin{remark}
\begin{enumerate}[itemsep=1ex]
\item[]
    \item From a general perspective, the Problem \ref{pr:feature} is related to the field of optimal recovery \cite{micchelli1977survey} where the goal is to estimate certain features of an element in a space $\XX_0$ from noisy indirect observations.
    
    \item Depending on the particular choice of the filter $U$, the Problem \ref{pr:feature} corresponds to several typical tasks in tomography. For example, if $U$ is chosen as an approximation of the Delta distribution, the Problem \ref{pr:feature} is equivalent to the classical image reconstruction problem. In fact, the filtered backprojection algorithm (FBP) is derived in this way from  the dual convolution identity \ref{r3} for the full data case, cf. \cite{natterer}. Another instance of Problem \ref{pr:feature} is edge reconstruction from tomographic data $y_\Theta$. For example, this can be achieved by choosing the feature extraction filter $U$ as the Laplacian of an approximation to the Delta distribution (e.g., Laplacian of Gaussian (LoG)). Then, the Problem \ref{pr:feature} boils down to an approximate recovery of the Laplacian of $f$, which is used in practical edge-detection algorithms (e.g., LoG-filter \cite{jain1989fundamentals,jahne2005digital}).
    
    \item Traditionally, the solution of Problem \ref{pr:feature} is realized via the 2-step approach: First, estimate $\signal$ and, second, apply convolution in order to estimate the feature map $U  \astx \signal $. This 2-step approach has several disadvantages: Since image reconstruction in CT is (possibly severely) ill-posed, the fist step might introduce huge errors in the reconstructed image. Those errors will also be propagated through the second (feature extraction) step that itself can be ill-posed and even further amplify errors. In order to reduce the error propagation of the first step, regularization strategies are usually applied. The choice of a suitable regularization strategy strongly depends on the particular situation and on the available prior information about the sought object $\signal$. However, the recovery of $\signal$ requires different prior knowledge than feature extraction. This dismatch can lead to a substantial loss of performance in the feature detection step. 
    
    \item In order to overcome the limitations mentioned in the remark above, image reconstruction and edge detection were combined in \cite{louis2,louis1}, where explicit formulas for estimating the edge map have been derived using the method of approximate inverse. This approach is also based on the dual convolution identity \ref{r3} and is closely related to the standard filtered backprojection (FBP) algorithm. However, this approach is not applicable to the case of undersampled data, since \cite{louis1,louis2} employ the dual convolutional identity \ref{r3} and calculate the reconstruction filters of the form $\Ro^*_\Theta u$. In this calculation, in order to achieve a good approximation of the integral in \ref{eq:backprojection}, a properly sampled Radon data is required. 
\end{enumerate}
\end{remark}

To overcome the limitations mentioned in the last remark above, we derive a novel framework for feature reconstruction in the next subsection (based on the forward convolutional identity \ref{r3}) that does not make use of  the continuous backprojection and, hence, can be applied to more general situations.

\subsection{Proposed feature reconstruction}

Our proposed framework for solving the feature reconstruction Problem~\ref{pr:feature} is based on the forward convolution identity \ref{r2} stated in Lemma~\ref{lem:radon}. Because the convolution on the right-hand side of \ref{r2} acts only on the second variable, the convolution identity is not affected by the subsampling in the  angular direction. Therefore, we have
\begin{equation}\label{eq:conv}
	\Ro_\Theta ( \signal \astx U ) = u_\Theta  \asts \Ro_\Theta\signal  
	\quad 
	\text{ with  } u_\Theta \coloneqq \Ro_\Theta U \,.
\end{equation}

Formally, the solution of \eqref{eq:conv} takes the form $\signal \astx U=\Ro_\Theta^{-1}(u_\Theta  \asts \Ro_\Theta\signal)$. If the data is properly sampled, this can be accurately and efficiently computed by applying the FBP algorithm to the filtered CT data $y_\Theta= u_\Theta  \asts \Ro_\Theta\signal$. In this context, the data filter $u_\Theta$ needs to be precomputed (from a given feature extraction filter $U$) in a filter design step. However, if the data $\Ro_\Theta \signal$ is not properly sampled, the equations \eqref{eq:conv} are underdetermined and, in this case, FBP doesn't produce accurate results, cf. \cite{natterer,juergentodd}. In order to account for data incompleteness and stably approximate the feature map $\signal \astx U$, a-priori information about the specific feature kernel $U$ or the feature map $f\astx U$ needs to be integrated into the reconstruction procedure. As a flexible way for doing this, we propose to approximate the inverse $\Ro_\Theta^{-1}$ by the following variational regularization scheme: 

\begin{equation}\label{eq:feature-rec}
	\frac{1}{2} \norm{\Ro_\Theta h - u_\Theta  \asts  \data_\Theta }_2^2
	+  \reg( h ) 
	\to 
	\min_{h \in \XX_0}    \,.
\end{equation}
Here  $\data_\Theta:\Theta  \times \R\to\R $ denotes the noisy (semi-discrete data) and $\reg \colon \XX_0 \to [0, \infty]$ is a regularization (penalty) term.

\begin{examples}
\label{ex:problems}
\begin{enumerate}[itemsep=1ex]
\item[]
 \item \textsc{Image reconstruction:}
Here, the feature extraction filter $U=U_\alpha$ is chosen as an approximation to the Delta distribution. For example, as $U = g_\alpha$ with 
\begin{equation}
\label{eq:gaussian2}
    g_\alpha(x) = \frac{1}{2\pi\alpha^2} \exp \left(-\frac{\norm{x}^2}{2\alpha^2}\right),\quad \alpha>0
\end{equation} 
being the Gaussian kernel. Another way of choosing $U$ for reconstruction purposes is through ideal low-pass filters $U_\alpha$, that are defined in the Frequency domain via $\Fo U_\alpha = \chi_{D(0,\alpha^{-1})}$, where $\alpha>0$, $D(0,\alpha^{-1})\subset \R^2$ denotes a ball in $\R^2$ with radius $1/\alpha$, and $\chi_A$ is the characteristic function of the set $A\subseteq\R^2$. It can be shown that in both cases $U_\alpha\asts f\to f$ as $\alpha\to 0$. These filters and its variants are often used in the context of the FBP algorithm.   

\item \label{ex:problems:gradient}\textsc{Gradient reconstruction:}
Here $U=U_\alpha$ is chosen as a partial derivative of an approximation of the Delta distribution. For example, as $U_\alpha=(U_\alpha^{(1)},U_\alpha^{(2)})$ with $U_\alpha^{(i)} :=\frac{\partial g_\alpha}{\partial_{x_i}}$, $i=1,2$. In this way, one obtains an approximation of the gradient of $\signal$ via \[\nabla_x f = (U_\alpha^{(1)}\astx f,U_\alpha^{(2)}\astx f) =: U_\alpha \astx \signal,\]
where in the last equation above we applied the convolution $\astx$ componentwise. Such approximations of the gradient are for example used inside the well-known Canny edge detection algorithm \cite{canny1986computational}.

\item \textsc{Laplacian reconstruction:}
Analogously to the gradient approximation, $U$ is chosen to be the Laplacian of an approximation to the Delta distribution. A prominent example, is the Laplacian of Gaussian (LoG), i.e., $U_\alpha=\Delta_x g_\alpha$, also known as the  Marr-Hildreth-Operator. This operator is also used for edge detection, corner detection and blob detection, cf. \cite{marr1980theory}.   
\end{enumerate}
\end{examples}

Depending on the problem at hand, there are several different ways of choosing the regularizer $r(h)$. Prominent examples in the case of image reconstruction include total variation (TV) or the $\ell^1$-norm (possibly in the context of some basis of frame expansion). For the reconstruction of the derivatives (or edges in general), we will use the $\ell^1$-norm as regularization term because derivatives of images can be assumed to be sparse and because the problem \eqref{eq:feature-rec} can be efficiently solved in this case.

\subsection{Filter design}\label{sec:filter}
  
The first step in our framework is filter design for \eqref{eq:feature-rec}. That is, given a feature extraction kernel $U$, we first need to calculate the corresponding filter $u_\Theta = \Ro_\Theta U$ for the CT data, cf.  \eqref{eq:conv}. In our setting, filter design therefore amounts to the evaluation of the Radon transform of $U$. In contrast to our approach, the filter design step in \cite{louis2} consists in the calculation of a solution of the dual equation $U = \Ro^* u$, given the feature extraction filter $U$. As discussed above, the latter case requires full data and might be computationally more involved. From this perspective, filter design required by our approach offers more flexibility and can be considered somewhat simpler.

We now discuss some of the Examples \ref{ex:problems} in more detail and calculate the associated CT data filters $u_\Theta$. In particular, we focus on the Gaussian approximations of the Delta distributions stated in \eqref{eq:gaussian2}. In a first step we compute the Radon transform of a Gaussian.

\begin{Lemma}
\label{lem:radon gaussian}
	The Radon transform of the Gaussian $g_{\alpha}$, defined by \eqref{eq:gaussian2}, is given by
	\begin{equation}
		\Ro g_{\alpha}(\varphi,s) = \frac{1}{\alpha\sqrt{2\pi}}\cdot\exp\left(-\frac{s^{2}}{2\alpha^{2}}\right).
	\end{equation}
\end{Lemma}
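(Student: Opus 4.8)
The plan is to compute the Radon transform of the Gaussian directly from the definition \eqref{eq:radon}, exploiting the rotational symmetry of $g_\alpha$. Since $g_\alpha(x)$ depends only on $\norm{x}$, the line integral over $L(\theta,s)$ is independent of the direction $\theta$, so I may as well fix $\theta = e_1$ and integrate along the vertical line $\{(s,t) : t \in \R\}$. Concretely, I would write
\[
\Ro g_\alpha(\varphi, s) = \int_{-\infty}^\infty g_\alpha(s\theta + t\theta^\bot)\, \mathrm{d}t
= \frac{1}{2\pi\alpha^2}\int_{-\infty}^\infty \exp\left(-\frac{s^2 + t^2}{2\alpha^2}\right)\mathrm{d}t,
\]
where I used $\norm{s\theta + t\theta^\bot}^2 = s^2 + t^2$ because $\theta$ and $\theta^\bot$ are orthonormal.

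Next I would factor out the $s$-dependent term, which does not depend on the integration variable $t$, leaving
\[
\Ro g_\alpha(\varphi,s) = \frac{1}{2\pi\alpha^2}\exp\left(-\frac{s^2}{2\alpha^2}\right)\int_{-\infty}^\infty \exp\left(-\frac{t^2}{2\alpha^2}\right)\mathrm{d}t.
\]
The remaining one-dimensional Gaussian integral evaluates to $\int_{-\infty}^\infty \exp(-t^2/(2\alpha^2))\,\mathrm{d}t = \alpha\sqrt{2\pi}$ by the standard formula $\int_{-\infty}^\infty e^{-u^2/(2\sigma^2)}\,\mathrm{d}u = \sigma\sqrt{2\pi}$. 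Substituting this value gives
\[
\Ro g_\alpha(\varphi,s) = \frac{\alpha\sqrt{2\pi}}{2\pi\alpha^2}\exp\left(-\frac{s^2}{2\alpha^2}\right) = \frac{1}{\alpha\sqrt{2\pi}}\exp\left(-\frac{s^2}{2\alpha^2}\right),
\]
which is the claimed identity.

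There is essentially no obstacle here: the computation is a textbook application of the definition together with the separability and normalization of the Gaussian. The only point requiring a word of care is the appeal to rotational invariance — that $\Ro g_\alpha(\varphi,s)$ is independent of $\varphi$ — which follows immediately from the fact that $g_\alpha(x) = g_\alpha(Rx)$ for any rotation $R$ and that the parametrization $t \mapsto s\theta + t\theta^\bot$ traces out the line $L(\theta,s)$ with unit speed. Alternatively, one could derive the same result from the Fourier slice theorem \ref{r1} together with the known Fourier transform of a Gaussian, but the direct approach above is shorter and self-contained.
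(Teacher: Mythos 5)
Your computation is correct and complete: the key identity $\norm{s\theta + t\theta^\bot}^2 = s^2 + t^2$ (orthonormality of $\theta$ and $\theta^\bot$) reduces everything to the standard one-dimensional Gaussian integral, and the constants work out to $\frac{1}{\alpha\sqrt{2\pi}}$ as claimed. The paper states this lemma without proof, treating it as a routine calculation, and your direct argument from the definition \eqref{eq:radon} is exactly the natural one; the alternative via the Fourier slice theorem that you mention would also work and is consistent with \eqref{eq:fourier of radon of gaussian} later in the paper.
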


Since the Gaussian $g_\alpha$ converges to the Delta distribution as $\alpha\to 0$, the smoothed version $f_\alpha\coloneqq f\astx g_\alpha$ constitutes an approximation to $f$ for small values of $\alpha$. In order obtain approximations to partial derivatives of $f$, we note that \(\frac{\partial f_{\alpha}}{\partial x_{i}} = f\astx \frac{\partial g_{\alpha}}{\partial x_{i}} \). Hence, using the feature extraction filters $U_\alpha^{(i)}:= \frac{\partial g_{\alpha}}{\partial x_{j}}$, the Problem \ref{pr:feature} amounts to reconstructing partial derivatives of $f$. Using this observation together with Lemma \ref{lem:radon gaussian} and the property \ref{r4}, we can explicitly calculate data filters used in different edge reconstruction algorithms (such as Canny or for the Marr-Hildreth-Operator).

\begin{Proposition}
\label{prop:data derivative}
	Let the Gaussian $g_{\alpha}$ be defined by \eqref{eq:gaussian2}. 
	\begin{enumerate}[itemsep=1ex]
	    \item \textsc{Gradient reconstruction:} For the feature extraction filter \(U_{\rm grad}\coloneqq \nabla_x g_\alpha\) the corresponding data filter  \(u_{\rm grad} = (u_\alpha^{(1)},u_\alpha^{(2)}) \) is given by 
	    \begin{equation}
	    \label{eq:radon derivative filter}
		u_{\rm grad}(\varphi,s) = \Ro U_{\rm grad}(\varphi,s)= - \frac{s}{\alpha^{3}\sqrt{2\pi}} \cdot \exp\left(-\frac{s^{2}}{2\alpha^{2}}\right)\cdot \theta(\varphi)
	\end{equation}
	Note that in \eqref{eq:radon derivative filter}, the notation $\Ro U_{\rm grad}$ refers to a vector valued function that is defined by a componentwise application of the Radon transform (cf. Example \ref{ex:problems}, No. \ref{ex:problems:gradient}).
	\item \textsc{Laplacian reconstruction:} For the feature extraction filter $U_\alpha\coloneqq \Delta_x g_\alpha$ the corresponding data filter is given by
	 \begin{equation}
	    \label{eq:radon log filter}
		u_{\rm LoG}(\varphi,s) = \Ro U_{\rm LoG}(\varphi,s)=\frac{1}{\alpha^{3}\sqrt{2\pi}} \cdot \exp\left(-\frac{s^{2}}{2\alpha^{2}}\right)\cdot \left(\frac{s^2}{\alpha^2}-1\right)
	\end{equation}
	\end{enumerate}
\end{Proposition}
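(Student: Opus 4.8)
The plan is to compute the two data filters directly from Lemma~\ref{lem:radon gaussian} together with the intertwining property~\ref{r4}, so that no new integrals need to be evaluated. For the gradient filter, I would start from the componentwise definition $U_{\rm grad}=\nabla_x g_\alpha=(\partial_{x_1}g_\alpha,\partial_{x_2}g_\alpha)$ and apply~\ref{r4} with multi-indices $\alpha=(1,0)$ and $\alpha=(0,1)$, which gives $\Ro\,\partial_{x_i}g_\alpha=\theta_i(\varphi)\,\partial_s\Ro g_\alpha$. Then I substitute the explicit formula $\Ro g_\alpha(\varphi,s)=(\alpha\sqrt{2\pi})^{-1}\exp(-s^2/(2\alpha^2))$ from Lemma~\ref{lem:radon gaussian} and differentiate once in $s$: $\partial_s\bigl[(\alpha\sqrt{2\pi})^{-1}\exp(-s^2/(2\alpha^2))\bigr]=-\frac{s}{\alpha^{3}\sqrt{2\pi}}\exp(-s^2/(2\alpha^2))$. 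Collecting the two components into the vector $\theta(\varphi)=(\cos\varphi,\sin\varphi)^\top$ yields \eqref{eq:radon derivative filter}.

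For the Laplacian filter, I would use property~\ref{r5}, $\Ro\Delta_x g_\alpha=\partial_s^2\Ro g_\alpha$, which is the cleaner route (alternatively~\ref{r4} summed over $\alpha=(2,0)$ and $\alpha=(0,2)$, noting $\theta_1^2+\theta_2^2=1$). Again substituting $\Ro g_\alpha$ from Lemma~\ref{lem:radon gaussian}, I differentiate twice in $s$: the first derivative is $-\frac{s}{\alpha^{3}\sqrt{2\pi}}\exp(-s^2/(2\alpha^2))$ as above, and differentiating once more by the product rule gives $\frac{1}{\alpha^{3}\sqrt{2\pi}}\exp(-s^2/(2\alpha^2))\bigl(\frac{s^2}{\alpha^2}-1\bigr)$, which is exactly \eqref{eq:radon log filter}.

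Both parts are thus short computations; the only genuine ingredients are Lemma~\ref{lem:radon gaussian} and the intertwining identities~\ref{r4}–\ref{r5}, both already available. The one point that deserves a sentence of care is the componentwise interpretation of the Radon transform on vector-valued kernels, so that the factor $\theta(\varphi)$ in \eqref{eq:radon derivative filter} is correctly understood as multiplying the scalar function $\partial_s\Ro g_\alpha$; this is already flagged in the statement. I do not anticipate a real obstacle — the main thing to get right is bookkeeping of the constants $\alpha$ under differentiation in $s$.
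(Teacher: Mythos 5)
Your proposal is correct and follows exactly the route the paper intends: it states just before the Proposition that the filters are obtained by combining Lemma~\ref{lem:radon gaussian} with the intertwining property~\ref{r4} (and~\ref{r5} for the Laplacian), and your differentiations of $\Ro g_\alpha$ in $s$ reproduce \eqref{eq:radon derivative filter} and \eqref{eq:radon log filter} with the correct constants. The only cosmetic point is to avoid reusing the symbol $\alpha$ both for the multi-index in~\ref{r4} and for the Gaussian width.
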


From Proposition \ref{prop:data derivative} we immediately obtain an explicit reconstruction formula for the approximate computation of the gradient and of the Laplacian of $f\in\sr(\R^2)$:
\[\nabla_x \signal_\alpha =\Ro^{-1}(u_{\rm grad}  \asts \Ro \signal)\quad\text{ bzw. }\Delta_x\signal_\alpha =\Ro^{-1}(u_{\rm LoG}  \asts \Ro \signal).\]

Both of the above formulas are of FBP type and can be implemented using the standard implementations of the FBP algorithm with a modified filter. This approach has the advantage that only one data filtering step has to performed, followed by the standard backprojection operation.

In order to derive FBP-filters for the gradient and Laplacian reconstruction, let us first note that $\Ro^{-1} = \Ro^\ast\circ\Lambda$, where the operator $\Lambda$ acts on the second variable and is defined in the Fourier domain by $\Fo_s (\Lambda g) (\varphi,\omega) = (4\pi)^{-1}\cdot \abs{\omega}\cdot (\Fo_s g) (\varphi,\omega)$ for $g\in\sr(\sph^1\times\R)$, cf. \cite{natterer}. Now, using the relations for the Fourier transform in 1D,  $\Fo (\mathrm{d} f/ \mathrm{d x})(\omega) = i\cdot  \omega \cdot \hat f(\omega)$,  $\Fo (\mathrm{d}^2 f/ \mathrm{d x}^2)(\omega) = -\omega^2\hat f(\omega)$ and $\Fo(f\ast g)=\sqrt{2\pi}\cdot \hat f\cdot \hat g$, together with
\begin{equation}
    \label{eq:fourier of radon of gaussian}
    \Fo_s(\Ro g_\alpha)(\varphi,s) = \frac{1}{\sqrt{2\pi}}\cdot\exp\left(-\frac{\alpha^2s^2}{2}\right),
\end{equation}
we obtain the following result.

\begin{Proposition}
\label{prop:data derivative 2}
    Let the FBP-filters $W_{\rm grad}=W_{\rm grad}(\varphi,s)$ and $W_{\rm LoG}=W_{\rm LoG}(\varphi,s)$ be given in the Fourier domain (componentwise) by
    \begin{equation}
    \label{eq:fbp filter grad}
        \Fo_s W_{\rm grad}(\varphi,\omega) = \frac{1}{4\pi}\cdot i\cdot \omega\cdot \abs{\omega}\cdot\exp\left(-\frac{-\alpha^2s^2}{2}\right)\cdot \theta(\varphi),
    \end{equation}
    and
    \begin{equation}
    \label{eq:fbp filter laplacian}
        \Fo_s W_{\rm grad}(\varphi,\omega) = - \frac{1}{4\pi}\cdot \abs{\omega}^3\cdot\exp\left(-\frac{-\alpha^2s^2}{2}\right),
    \end{equation}
    where $\varphi\in(0,2\pi)$ and $\omega\in\R$. Then, for $f\in\sr(\R^2)$, we have
    \begin{equation}
    \label{eq:fbp rec formulae}
        \nabla_x \signal_\alpha =\Ro^{\ast}(W_{\rm grad}  \asts \Ro \signal)\quad\text{ and }\quad \Delta_x\signal_\alpha =\Ro^{\ast}(W_{\rm LoG}  \asts \Ro \signal).
    \end{equation}
\end{Proposition}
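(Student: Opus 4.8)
The plan is to combine the explicit data filters of Proposition~\ref{prop:data derivative} with the classical factorization $\Ro^{-1}=\Ro^\ast\circ\Lambda$ of the continuous inverse Radon transform, and then to absorb $\Lambda$ into the data filter by passing to the Fourier domain in the radial variable.

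First I would recall that, by the convolution identity \ref{r2} and Lemma~\ref{lem:radon gaussian}, the smoothed derivatives are given by $\nabla_x\signal_\alpha=\signal\astx\nabla_x g_\alpha=\Ro^{-1}(u_{\rm grad}\asts\Ro\signal)$ and $\Delta_x\signal_\alpha=\Ro^{-1}(u_{\rm LoG}\asts\Ro\signal)$, that is, the FBP-type formulas stated just after Proposition~\ref{prop:data derivative}. Substituting $\Ro^{-1}=\Ro^\ast\Lambda$ turns these into $\nabla_x\signal_\alpha=\Ro^\ast\bigl(\Lambda(u_{\rm grad}\asts\Ro\signal)\bigr)$ and its analogue for the Laplacian. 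Since $\Lambda$ acts on the second variable as the Fourier multiplier $(4\pi)^{-1}\abs{\omega}$ and, by the 1D convolution theorem $\Fo_s(g\asts h)=\sqrt{2\pi}\,\Fo_s g\cdot\Fo_s h$, the $\asts$-convolution becomes a product under $\Fo_s$, the map $g\mapsto\Lambda(u_{\rm grad}\asts g)$ coincides with $g\mapsto W_{\rm grad}\asts g$ for the filter $W_{\rm grad}$ defined in the $\Fo_s$-domain by $\Fo_s W_{\rm grad}(\varphi,\omega)=(4\pi)^{-1}\abs{\omega}\,\Fo_s u_{\rm grad}(\varphi,\omega)$ (the factor $\sqrt{2\pi}$ cancels on the two sides of $\asts$), and likewise $\Fo_s W_{\rm LoG}=(4\pi)^{-1}\abs{\omega}\,\Fo_s u_{\rm LoG}$. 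Thus \eqref{eq:fbp rec formulae} follows as soon as $\Fo_s W_{\rm grad}$ and $\Fo_s W_{\rm LoG}$ are matched with \eqref{eq:fbp filter grad} and \eqref{eq:fbp filter laplacian}.

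To compute these filters I would avoid differentiating the closed forms \eqref{eq:radon derivative filter}, \eqref{eq:radon log filter} and instead use the intertwining relations of Lemma~\ref{lem:radon}: by \ref{r4}, $u_{\rm grad}=\Ro\nabla_x g_\alpha=\theta(\varphi)\,\partial_s\Ro g_\alpha$, and by \ref{r5}, $u_{\rm LoG}=\Ro\Delta_x g_\alpha=\partial_s^2\Ro g_\alpha$. Applying $\Fo_s$, using $\Fo_s(\partial_s g)=i\omega\,\Fo_s g$ and $\Fo_s(\partial_s^2 g)=-\omega^2\,\Fo_s g$ together with $\Fo_s(\Ro g_\alpha)(\varphi,\omega)=(2\pi)^{-1/2}\exp(-\alpha^2\omega^2/2)$ from \eqref{eq:fourier of radon of gaussian}, gives $\Fo_s u_{\rm grad}(\varphi,\omega)=i\omega\,(2\pi)^{-1/2}\exp(-\alpha^2\omega^2/2)\,\theta(\varphi)$ and $\Fo_s u_{\rm LoG}(\varphi,\omega)=-\omega^2\,(2\pi)^{-1/2}\exp(-\alpha^2\omega^2/2)$. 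Multiplying by $(4\pi)^{-1}\abs{\omega}$ then produces the filters in \eqref{eq:fbp filter grad} and \eqref{eq:fbp filter laplacian} (up to the evident typographical corrections in those displays — the argument $s$ in the exponent should read $\omega$, and likewise the sign there), and substituting them back into $\Ro^\ast(\,\cdot\,)$ yields \eqref{eq:fbp rec formulae}.

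The only step that needs a word of justification is the identity ``$\Lambda$ of a convolution equals convolution with the $\Lambda$-filtered kernel'': the multipliers $\omega\abs{\omega}$ and $\abs{\omega}^3$ fail to be smooth at $\omega=0$, so $W_{\rm grad}$ and $W_{\rm LoG}$ are tempered distributions rather than Schwartz functions, and the identity $\Lambda(u\asts\Ro\signal)=W\asts\Ro\signal$ must be understood — exactly as for the classical filtered backprojection filter — in the distributional sense. This is harmless, since $\Fo_s\Ro\signal(\varphi,\omega)=\sqrt{2\pi}\,\hat\signal(\omega\theta(\varphi))$ decays rapidly in $\omega$ by the Fourier slice theorem \ref{r1}, so every product of symbols occurring here is well defined and $\Ro^\ast$ is ultimately applied to a Schwartz function. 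Beyond this point, the proof is only a matter of keeping track of the constants $\sqrt{2\pi}$, $2\pi$ and $4\pi$ coming from the convolution theorem, from $\Fo_s\Ro g_\alpha$, and from the definition of $\Lambda$.
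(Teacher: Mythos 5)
Your derivation is correct and coincides with the paper's own argument: the proposition is proved there in the single sentence preceding it, by combining $\Ro^{-1}=\Ro^\ast\circ\Lambda$ with the 1D Fourier rules for derivatives and convolutions and with \eqref{eq:fourier of radon of gaussian}, exactly as you do (your added remark on the distributional meaning of the non-smooth multipliers is a welcome extra). One small caveat: with the stated convention $\Fo_s(g\asts h)=\sqrt{2\pi}\,\Fo_s g\cdot\Fo_s h$, your (correct) filter $\Fo_s W_{\rm grad}=(4\pi)^{-1}\abs{\omega}\,\Fo_s u_{\rm grad}$ carries an extra factor $(2\pi)^{-1/2}$ relative to the displays \eqref{eq:fbp filter grad}--\eqref{eq:fbp filter laplacian}, so that normalization constant should be added to your list of typographical corrections, alongside the $s\mapsto\omega$ substitution in the exponent, the spurious sign there, and the left-hand side of \eqref{eq:fbp filter laplacian} reading $W_{\rm grad}$ instead of $W_{\rm LoG}$.
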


Since the FBP algorithm is a regularized implementation of $\Ro^{-1}$ (cf. \cite{natterer}), a standard toolbox implementation could be used in practice in order to compute $\nabla_x f_{\alpha}$ and $\Delta_x f$. To this end, one only needs to use the modified filters for the FBP, provided in \eqref{eq:fbp filter grad} and \eqref{eq:fbp filter laplacian}, instead of the standard FBP filter (such as Ram-Lak). Again, let us emphasize that the reconstruction formulae \eqref{eq:fbp rec formulae} can only be used in the case of properly sampled CT data. If the CT data does not satisfy the sampling requirements, e.g., in case of angular undersampling, this FBP algorithm will produce artifacts which can substantially degrade the performance of edge detection. In such cases, our framework \eqref{eq:feature-rec} should be used in combination with a suitable regularization term. In the context of edge reconstruction, we propose to use $\ell^1$-regularization in combination with $\ell^2$-regularization. This approach will be discussed in the next section.

So far, we constructed data filters for the approximation of the gradient and Laplacian in the spatial, cf. Proposition \ref{prop:data derivative}, and derived according FBP filters in the Fourier domain in Proposition \ref{prop:data derivative 2}. In a similar fashion, one can derive various related  examples  by replacing the Gaussian by feature kernels whose Radon transform is known analytically. Another way of obtaining practically relevant data filters (for a wide class of feature filter) is to derive expressions for the data filters in the Fourier domain (i.e., filter design in the Fourier domain). In the following, we provide two basic examples for filter design in the Fourier domain. To this end, we will employ the Fourier slice theorem, cf. Lemma~\ref{lem:radon}, \ref{r1}.

\begin{remark}
\label{rem:data filters}
\item[]
    \begin{enumerate}[itemsep=1ex]
        \item \label{ex:lowpass} \textsc{Lowpass Lalpacian:} The Laplacian of the ideal lowpass is defined as \[U_b = \Delta_x \Fo^{-1} (\chi_{D(0,b)}),\] where $b$ is the bandwidth of $U_b$. Using the property \ref{r5}, we get $\Ro(U_b) = \frac{\partial^2}{\partial s^2}\Ro(\Fo^{-1}(\chi_{D(0,b)}))$. By the Fourier slice theorem, we obtain \[\Fo_s(\Ro(U_b))(\varphi,\omega) = -\omega^2 \chi_{D(0,b)}(\omega\cdot\theta(\varphi)) = -\omega^2 \chi_{[-b,b]}(\omega).\]
        Hence, the associated data filter is given by
        \begin{align}
            \notag
            u_b(\varphi,s):=\Ro U_b(\varphi,s) &= \frac{\partial^2}{\partial s^2} \Fo_s^{-1}(\chi_{[-b,b]})(s) = \sqrt{\frac{2}{\pi}}\cdot \frac{\partial^2}{\partial s^2} \frac{\sin(bs)}{s}\\[1ex]
            \label{filterchi}
            &= \sqrt{\frac{2}{\pi}}\cdot \left(\frac{2\sin(bs)}{s^3}-\frac{2b\cos(bs)}{s^2}-\frac{b^2\sin(bs)}{s}\right)
        \end{align}
       
        Because $u_b$ is $b$-band-limited, the convolution with the filter~\eqref{filterchi} can  be discretized systematically whenever  the underlying image is essentially $b$ band-limited. To that end, assume that the function $\signal$ has  bandwidth $b$. Then $\data = \Ro \signal$ has bandwidth $b$ as well (with respect to the second variable), and therefore the continuous convolution $\Ro \signal \asts u_b$  can be exactly computed via discrete convolution. Using  discretization \eqref{eq:sampling} and taking $s_\ell = \frac{\pi}{b}\cdot \ell $ we obtain from  \eqref{filterchi} the discrete filter
        \begin{align}\label{filterquadratic}
        u_b(\varphi,s_\ell)=
        - \sqrt{\frac{2}{\pi}}\cdot b^3\cdot \begin{cases}
              \dfrac{1}{3},  & \text{ if}\quad \ell=0, \\[2ex]
             \dfrac{2\cdot  (-1)^\ell}{\pi^2 \ell^2}  & \text{ if}\quad \ell\neq 0.
           \end{cases}
        \end{align}
        According to one-dimensional Shannon sampling theory, we compute $\data \asts u_b $ via discrete convolution with the  filter coefficients given in \eqref{filterquadratic}.       
        
        \item\label{ex:ramlak} \textsc{Ram-Lak type filter:} Consider the feature extraction filter \[ U_{b,1} = \Delta_x  \, \Fo^{-1} \left[\chi_{D(0,b)} \cdot (1-\norm{\,\cdot\,})_+ \right], \] where $(1-\norm{\,\cdot\,})_+\coloneqq\max\set{0,1-\norm{\,\cdot\,}}$. Note that for $b\geq 1$, we have $u_{b,1}=u_{1,1}$, since in this case $\chi_{D(0,b)} \cdot (1-\norm{\,\cdot\,})_+ = (1-\norm{\,\cdot\,})_+$. Hence, we consider the case $b\leq 1$. In a similar fashion as above,  we obtain
         \begin{align}
            \notag
            u_{b,1}:=\Ro U_{b,1}(\varphi,s) &= \frac{\partial^2}{\partial s^2} \Fo_s^{-1}\left[\chi_{[-b,b]}\cdot (1-\abs{\,\cdot\,})\right] (s)\\[1ex]
            &= \frac{\partial^2}{\partial s^2} \left[\Fo_s^{-1}[\chi_{[-b,b]}] (s) - \Fo_s^{-1} [\abs{\,\cdot\,}\cdot\chi_{[-b,b]}](s) \right] .
        \end{align}
        
        Evaluating $u_{b,1}$ at  $s_\ell = \frac{\pi}{b}\cdot \ell $, we get
        \begin{align}\label{filterramlaktype}
        u_{\rm b,1} (\theta, s_\ell) = \sqrt{\frac{2}{\pi}}\cdot b^3\cdot
        \begin{cases}
              \phantom{-}\dfrac{3b-4}{12}  & \text{ if } \ell=0 \\[2ex]
                \phantom{-}\dfrac{3b-2}{\pi^2 \ell^2}   & \text{ if } \ell  \text{ is even} \\[2ex]
                -\dfrac{3b-2}{\pi^2 \ell^2}+\dfrac{12b}{\pi^4\ell^4} & \text{ if } \ell \text{ is odd }.
           \end{cases}
        \end{align}
        Again, we can evaluate  $\data \asts u_{b,1} $ via discrete convolution with the filter coefficients  \eqref{filterramlaktype}.
    \end{enumerate}
\end{remark}

Finally, let us note that there are several other examples for feature reconstruction filters for which one can derive explicit formulae of corresponding data filters in a similar way as we did in this section. For example, in the case of approximation of Gaussian derivatives of higher order or for band-limited versions of derivatives.

\section{Numerical results}\label{sec:num}
 
In our the numerical experiments, we focus on the reconstruction of edge maps. To that end, we use our framework \eqref{eq:feature-rec} in combination with feature extraction filters that we have derived in Proposition \ref{prop:data derivative} and in Remark \ref{rem:data filters}. Since the gradient and the Laplacian of an image have relatively large values only around edges and small values elsewhere, we aim at exploiting this sparsity and, hence, use a linear combination $ \reg(h) = \mu \norm{ \nabla h }_2^2 + \lambda \norm{h}_1$ as regularizer in \eqref{eq:feature-rec}. The resulting minimization problem then reads    
\begin{equation}\label{eq:minnoisy2}
	\frac{1}{2} \, \norm{\Ro_\Theta h - u_\Theta  \asts  \data_\Theta }_2^2
	+ 
	 \mu \Vert \nabla h\Vert_2^2 + \lambda \norm{h}_1
	\to 
	\min_{h \in \XX_0}  \,.
\end{equation}
If $\mu=0$, this approach reduces to the $\ell^1$ regularization which is known to favor sparse solution. If $\mu\neq 0$, the additional $H^1$-term  increases smoothness of the recovered edges. In order to numerically minimize  \eqref{eq:minnoisy2}, we use the fast iterative shrinkage-thresholding algorithm (FISTA) of  \cite{fista}. Here, we apply the  forward step  to $\frac{1}{2}\norm{\Rd_\Theta h - u_\Theta  \asts  \data_\Theta }_2^2	+  \mu \Vert \nabla h\Vert_2^2$ and the backward step to $\lambda \norm{h}_1$. The discrete $\ell^p$ norms are  defined  by $\norm{h}_p = (\sum_{i,j=1}^N \abs{h_{ij}}^p)^\frac{1}{p}$ and the  discrete Radon transform $\Rd_\Theta$ is computed via the composite trapezoidal rule and bilinear interpolation. The adjoint Radon transform $\Rd_\Theta^\ast$ is implemented as a discrete backprojection following \cite{natterer}.

\subsection{Reconstruction of the Laplacian feature map}
\label{subsec:rec of laplacian}
We first investigate the feasibility of the proposed approach for recovering the Laplacian of the initial image. For our first experiment, we use a phantom image, which is defined as a characteristic function of the union of three discs and has the size $N\times N$ with $N=200$, cf. Figure \ref{subfig:phantom}. Since, according to  the sampling condition \eqref{eq:sampling-cond}, full aliasing free angular sampling requires $\lceil \pi N_s \rceil = 472$ samples in the $s$-variable, we computed tomographic data at $2N_s +1= 301$ equally spaced signed distances $s_\ell \in [-1.5,1.5]$ and at $N_\varphi=40$ equally spaced directions in $[0, \pi)$. This data is properly sampled in the $s$-variable, but undersampled in the angular variable $\varphi$, cf. Figure~\ref{subfig:data}. 

From this data, we computed the approximate Laplacian reconstruction, shown in Figure \ref{subfig:fbp-log}, using the standard FBP algorithm in combination with the LoG-filtered data $u_{\rm LoG}\asts y_\Theta$ that we computed in a preprocessing step using the LoG data filter from Proposition \ref{prop:data derivative}. It can be clearly observed that FBP introduces prominent undersampling artefacts (streaks), so that many edges in the calculated feature map are not related to the actual image features. This shows, that the edge maps computed by FBP (from undersampled data) can include unreliable information and even falsify the true edge information (since artefacts and actual edges superimpose). In a more realistic setup, this could be even worse, since artefacts may not be that clearly distinguishable from actual edges.

Figure~\ref{fig:lap} shows reconstructions of feature maps from  noise-free CT data that we computed using our framework \eqref{eq:minnoisy2} for three different choices of feature extraction filters and for two different sets of regularization parameters. The first row of Figure~\ref{fig:lap} shows reconstructions with $\mu=0$ and $\lambda=0.001$ using 1000 iterations of the FISTA algorithm, whereas the second row shows reconstructions that were computed using an additional $H^1$-term with $\lambda=\mu=0.001$ and using 500 iterations of the FISTA algorithm. In contrast to the FBP-LoG reconstruction (shown in Figure \ref{subfig:fbp-log}), the undersampling artefacts have been removed in all cases. As expected, the use of $\ell^1$-regularization without an additional $H^1$-smoothing (shown in first row) produces sparser feature maps, as opposed to reconstruction shown in the second row. However, we also observed, that the iterative reconstruction based only on the $\ell^1$-minimization (without the $H^1$-term) sometimes has trouble reconstructing the object boundaries properly. In fact, we found that a proper reconstruction of boundaries is quite sensitive to the choice of the $\ell^1$-regularization parameter. If this parameter is chosen too large, we observed that the boundaries could be incomplete or even disappear. Since the $\ell^1$-regularization parameter controls the sparsity of the reconstructed feature map, this observation is actually not surprising. By including an additional $H^1$-regularization term, the reconstruction results become less sensitive to the choice of regularization parameters.

\begin{figure}[htb]
\begin{subfigure}{.32\textwidth}
  \centering
  \includegraphics[width=\textwidth,height=\textwidth]{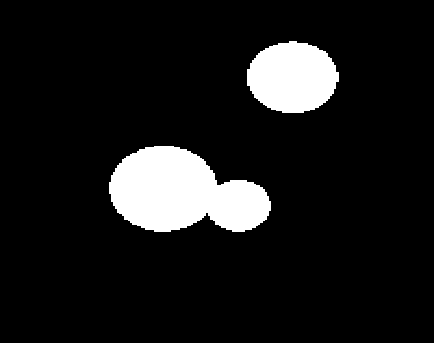}
  \caption{Phantom}
  \label{subfig:phantom}
\end{subfigure}
\hfill
\begin{subfigure}{.32\textwidth}
    \centering
    \includegraphics[width=\textwidth,height=\textwidth]{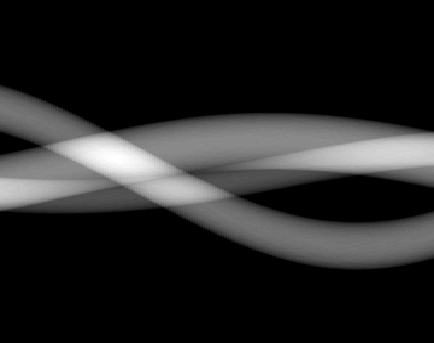}
    \caption{CT data}
  \label{subfig:data}
\end{subfigure}
\hfill
\begin{subfigure}{.32\textwidth}
    \centering
    \includegraphics[width=\textwidth,height=\textwidth]{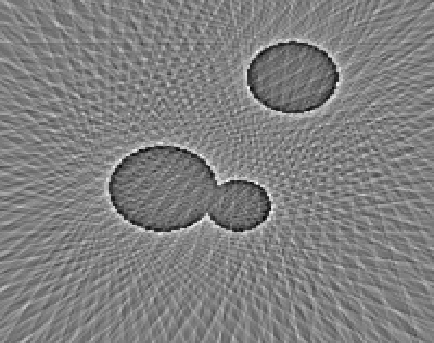}
    \caption{FBP-LoG reconstruction}
  \label{subfig:fbp-log}
\end{subfigure}
\medskip

\caption{\textsc{Reconstruction of the Laplacian feature map using FBP}. 
The phantom image of size $200\times200$ consisting of a union of three discs (\subref{subfig:phantom}) and the corresponding angularly undersampled CT data, measured at 40 equispaced angles in $[0,\pi)$ and properly sampled in the $s$-variable with $301$ equispaced samples $s_\ell \in[-1.5,1.5]$ (\subref{subfig:data}). Subfigure (\subref{subfig:fbp-log}) shows the Laplacian of Gaussian (LoG) reconstruction using the standard FBP algorithm. It can be clearly observed that FBP introduces prominent streaking artefacts that are due to the angular undersampling.}
\end{figure}

\vskip3ex

\begin{figure}[htb!]
\begin{subfigure}{.32\textwidth}
  \centering
  \includegraphics[width=\textwidth,height=\textwidth]{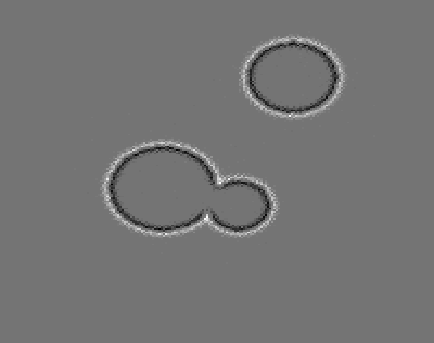}
  \caption{LoG: $\mu=0$, $\lambda=0.001$}
  \label{subfig:log}
\end{subfigure}
\hfill
\begin{subfigure}{.32\textwidth}
    \centering
    \includegraphics[width=\textwidth,height=\textwidth]{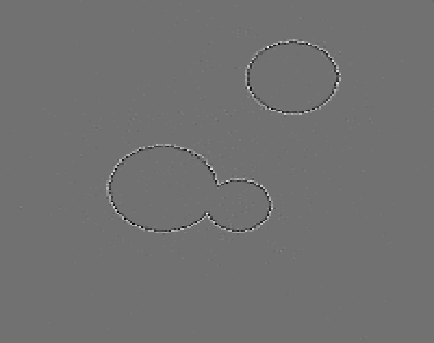}
    \caption{Lowpass:$\mu=0, \lambda=0.001$}
  \label{subfig:lowpass}
\end{subfigure}
\hfill
\begin{subfigure}{.32\textwidth}
    \centering
    \includegraphics[width=\textwidth,height=\textwidth]{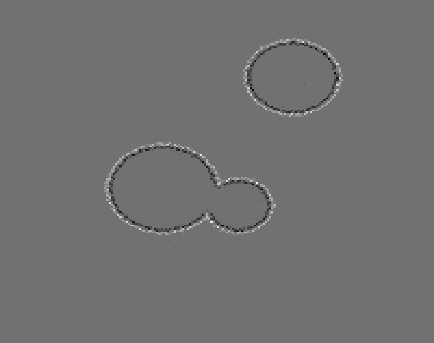}
    \caption{Ram-Lak:$\mu=0, \lambda=0.001$}
  \label{subfig:ramlak}
\end{subfigure}
\vskip3ex
\begin{subfigure}{.32\textwidth}
  \centering
  \includegraphics[width=\textwidth,height=\textwidth]{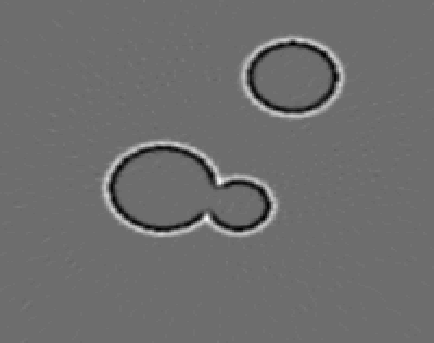}
  \caption{LoG: $\mu=\lambda=0.001$}
  \label{subfig:logH1}
\end{subfigure}
\hfill
\begin{subfigure}{.32\textwidth}
    \centering
    \includegraphics[width=\textwidth,height=\textwidth]{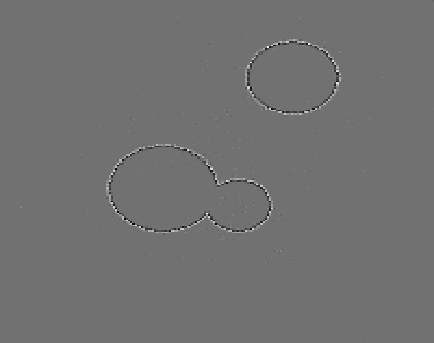}
    \caption{Lowpass: $\mu=\lambda=0.001$}
  \label{subfig:lowpassH1}
\end{subfigure}
\hfill
\begin{subfigure}{.32\textwidth}
    \centering
    \includegraphics[width=\textwidth,height=\textwidth]{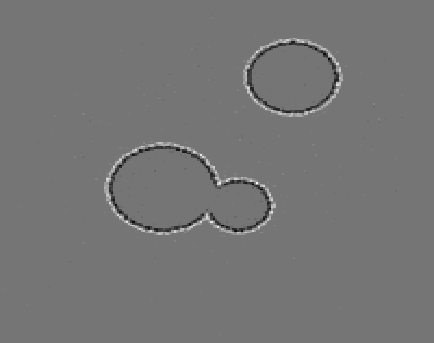}
    \caption{Ram-Lak: $\mu=\lambda=0.001$}
  \label{subfig:ramlakH1}
\end{subfigure}
\medskip

\caption{\textsc{Reconstruction of Laplacian feature maps using our framework}. This figure shows reconstructions of feature maps from noise-free CT data that we computed using our framework \eqref{eq:minnoisy2} for three different choices of feature extraction filters and for two different sets of regularization parameters. Here, LoG refers to  \eqref{eq:radon log filter}, Lowpass to \eqref{filterquadratic}, and Ram-Lak to \eqref{filterramlaktype}.  The first row shows reconstructions with $\mu=0$ and $\lambda=0.001$ using 1000 iterations of the FISTA algorithm, whereas the second row shows reconstructions that were computed using an additional $H^1$-term with $\lambda=\mu=0.001$ and using 500 iterations of the FISTA algorithm. In contrast to the FBP-LoG reconstruction (shown in Figure \ref{subfig:fbp-log}), the undersampling artefacts have been removed in all cases.
}
\label{fig:lap}
\end{figure}

\begin{figure}[htb!]
\begin{subfigure}{.32\textwidth}
  \centering
  \includegraphics[width=\textwidth,height=\textwidth]{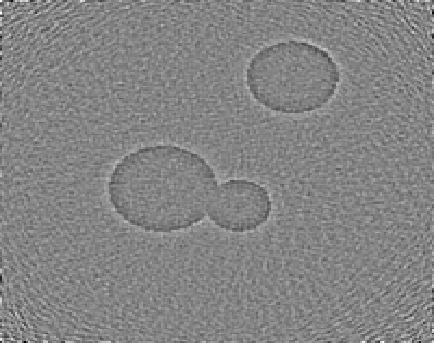}
  \caption{Ram-Lak: $\mu=0.001, \lambda=0$}
  \label{subfig:noisy ramlak1}
\end{subfigure}
\hfill
\begin{subfigure}{.32\textwidth}
    \centering
    \includegraphics[width=\textwidth,height=\textwidth]{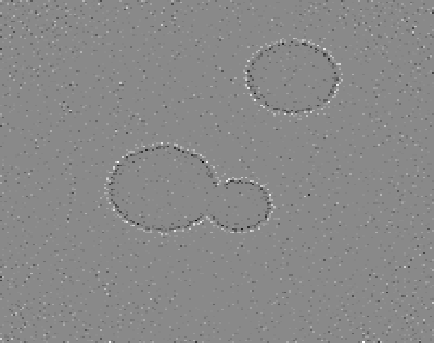}
    \caption{Ram-Lak: $\mu=0, \lambda=0.001$}
  \label{subfig:noisy ramlak2}
\end{subfigure}
\hfill
\begin{subfigure}{.32\textwidth}
    \centering
    \includegraphics[width=\textwidth,height=\textwidth]{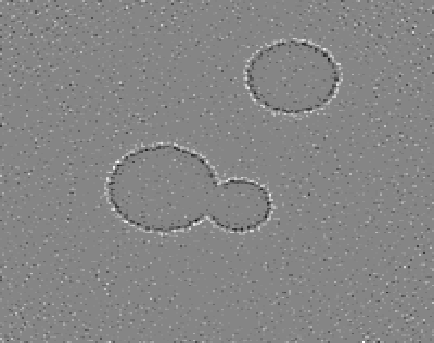}
    \caption{Ram-Lak: $\mu=\lambda=0.001$}
  \label{subfig:noisy ramlak3}
\end{subfigure}
\medskip
\caption{\textsc{Reconstructions of Laplacian feature maps from noisy data.} Reconstruction in (\subref{subfig:noisy ramlak1}) were calculated using only $H^1$-regularization, in (\subref{subfig:noisy ramlak2}) using only $\ell^1$-regularization, and in (\subref{subfig:noisy ramlak3}) using  combined $\ell^1$ and $H^1$-regularization.}\label{fig:ergl1H1noise}
\end{figure}

In order to simulate the real world measurements  more realistically, we added  noise to the CT data that we used in the previous experiment. Using this noisy data, we calculated reconstructions via \eqref{eq:minnoisy2} in combination with the Ram-Lak type filter \eqref{filterramlaktype} using three different sets of regularization parameters and 1000 iterations of the FISTA algorithm in each case. The reconstruction using the parameters $\lambda=0$ and $\mu=0.001$ (i.e., only $H^1$-regularization was applied) is shown in Figure \ref{subfig:noisy ramlak1}. The reconstruction in Figure \ref{subfig:noisy ramlak2} uses only $\ell^1$ regularization, i.e., $\mu=0$ and $\lambda=0.001$, and the  reconstruction in Figure \ref{subfig:noisy ramlak3} applies both regularization terms with $\lambda = \mu=0.001$. In both reconstructions shown in Figure \ref{subfig:noisy ramlak2}  and \ref{subfig:noisy ramlak3}, the recovered features are much more apparent than for pure $H^1$-regularization. As in the noise-free situation, we observe that the (pure) $\ell^1$-regularization might generate discontinuous  boundaries, whereas the combined $H^1$-$\ell^1$-regularization  results in smoother and (seemingly) better represented edges.

\subsection{Edge detection}

One main application of our framework for the reconstruction of approximate image gradients or approximate Laplacian feature maps is in edge detection. Clearly, feature maps that contain less artifacts can be expected to provide more accurate edge maps. 

For this experiment, we used a modified phantom image that is shown in Figure \ref{subfig:mod phantom}. In contrast to the previously used phantom, this image includes also weaker edges that are more challenging to detect. For this phantom, we generated CT data using the same sampling scheme as in our first experiment (Section \ref{subsec:rec of laplacian}) and computed the LoG-feature maps $\signal  \astx  U_{\rm LoG}$ using the FBP approach (cf. Figure \ref{subfig:fbp-log mod}) and using our approach (cf. Figure \ref{subfig:log mod}) with $\mu=0$, $\lambda  = 0.002$, and $100$ iterations of the FISTA algorithm for \eqref{eq:minnoisy2}. Subsequently, we generated corresponding binary edge maps by extracting the zero crossings of these LoG-feature maps (cf. Figure \ref{subfig:fbp-log edge map} and \ref{subfig:log edgemap}) by using the MATLABs \texttt{edge} functions. Note that this procedure is one of the standard edge detection algorithm, known as the LoG edge detector, cf.  \cite{marr1980theory}. For both methods, we took standard deviation $\alpha = 1.3$ for the application of the Gaussian smoothing  and  threshold $t = 0.005$ for the detection of the zero crossings. As can be clearly seen from the results, the edge detection based on our approach (cf. Figure \ref{subfig:fbp-log edge map}) is able to detect also the weaker edges inside the large disc. In contrast, edge detection in combination with FBP-LoG featuremap was not able to detect the edge set correctly due to strong undersampling artefacts.

\begin{figure}[htb!]
\begin{subfigure}{.32\textwidth}
  \centering
  \includegraphics[width=\textwidth,height=\textwidth]{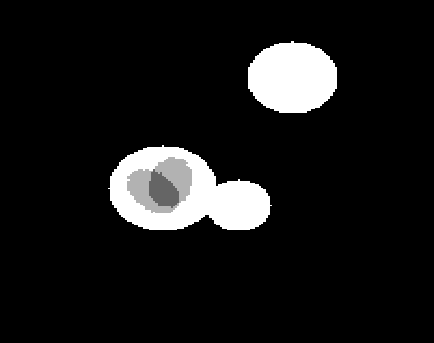}
  \caption{Modified phantom}
  \label{subfig:mod phantom}
\end{subfigure}
\hfill
\begin{subfigure}{.32\textwidth}
    \centering
    \includegraphics[width=\textwidth,height=\textwidth]{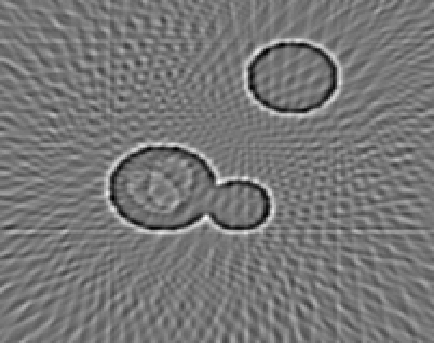}
    \caption{FBP-LoG}
  \label{subfig:fbp-log mod}
\end{subfigure}
\hfill
\begin{subfigure}{.32\textwidth}
    \centering
    \includegraphics[width=\textwidth,height=\textwidth]{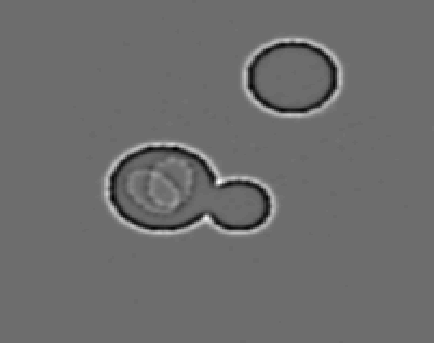}
    \caption{LoG: $\mu=\lambda=0.001$}
  \label{subfig:log mod}
\end{subfigure}
\vskip3ex
\phantom{
\begin{subfigure}{.3\textwidth}
  \centering
  \includegraphics[width=\textwidth,height=\textwidth]{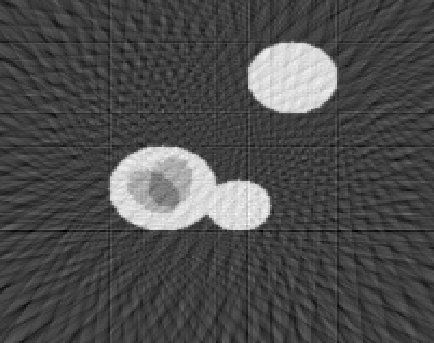}
\end{subfigure}
}
\hfill
\hspace{0.12cm}
\begin{subfigure}{.32\textwidth}
    \centering
    \includegraphics[width=\textwidth,height=\textwidth]{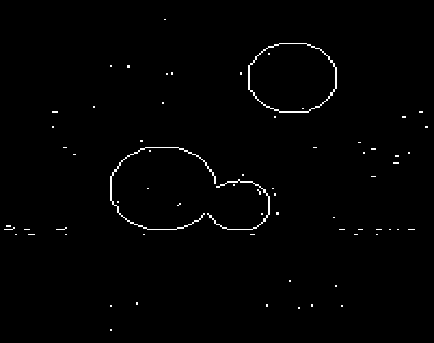}
    \caption{FBP-LoG: edge map}
  \label{subfig:fbp-log edge map}
\end{subfigure}
\hfill
\hspace{0.025cm}
\begin{subfigure}{.32\textwidth}
    \centering
    \includegraphics[width=\textwidth,height=\textwidth]{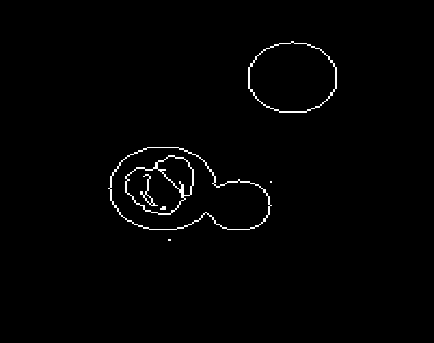}
    \caption{LoG: edge map}
  \label{subfig:log edgemap}
\end{subfigure}
\medskip

\caption{\textsc{LoG edge detection}. 
The modified phantom image (\subref{subfig:mod phantom}) includes also weaker edges that are more challenging to detect. Subfigures (\ref{subfig:fbp-log mod}) and (\ref{subfig:log mod}) show reconstructions of the LoG feature maps, that were generated using the FBP algorithm and our approach, respectively. The corresponding binary edge masks generated by the LoG edge detector are shown in (\ref{subfig:fbp-log edge map}) and (\ref{subfig:log edgemap}). }\label{fig:edge}
\end{figure}

\begin{figure}[htb!]
\begin{subfigure}{.32\textwidth}
  \centering
  \includegraphics[width=\textwidth,height=\textwidth]{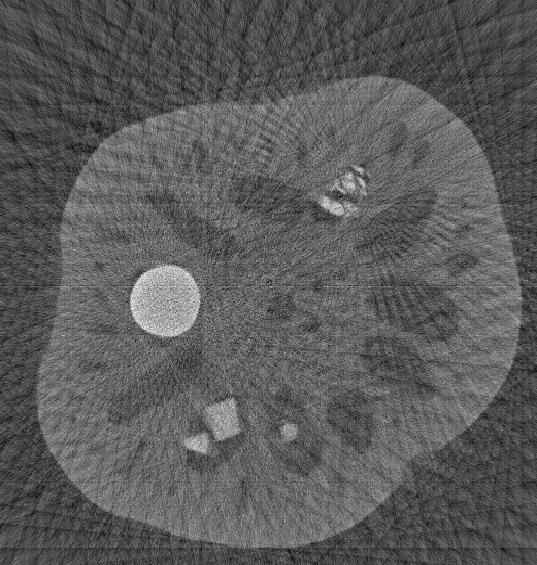}
  \caption{FBP reconstruction}
  \label{subfig:lotus fbp}
\end{subfigure}
\hfill
\begin{subfigure}{.32\textwidth}
    \centering
    \includegraphics[width=\textwidth,height=\textwidth]{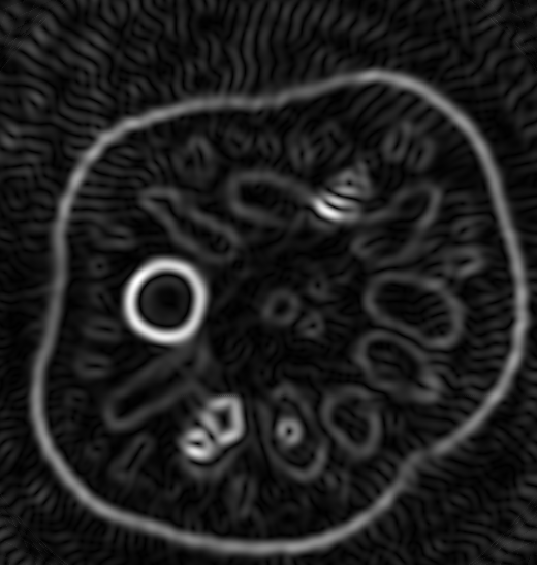}
    \caption{FBP-grad}
  \label{subfig:lotus fbp gradient}
\end{subfigure}
\hfill
\begin{subfigure}{.32\textwidth}
    \centering
    \includegraphics[width=\textwidth,height=\textwidth]{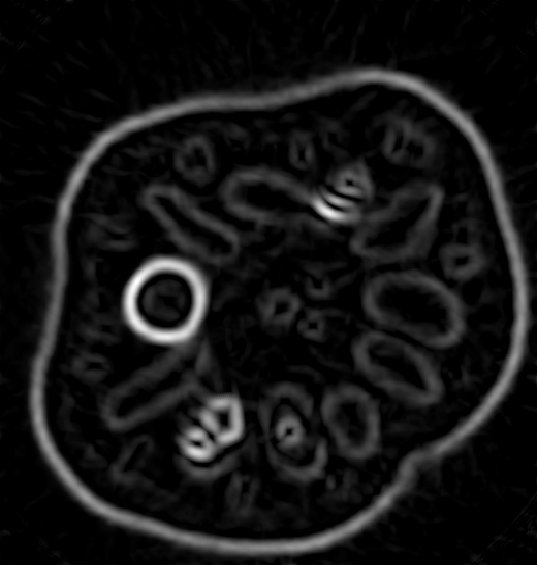}
    \caption{grad: $\mu=0$, $\lambda=0.01$}
  \label{subfig:lotus log gradient}
\end{subfigure}
\vskip3ex
\begin{subfigure}{.32\textwidth}
  \centering
  \includegraphics[width=\textwidth,height=\textwidth]{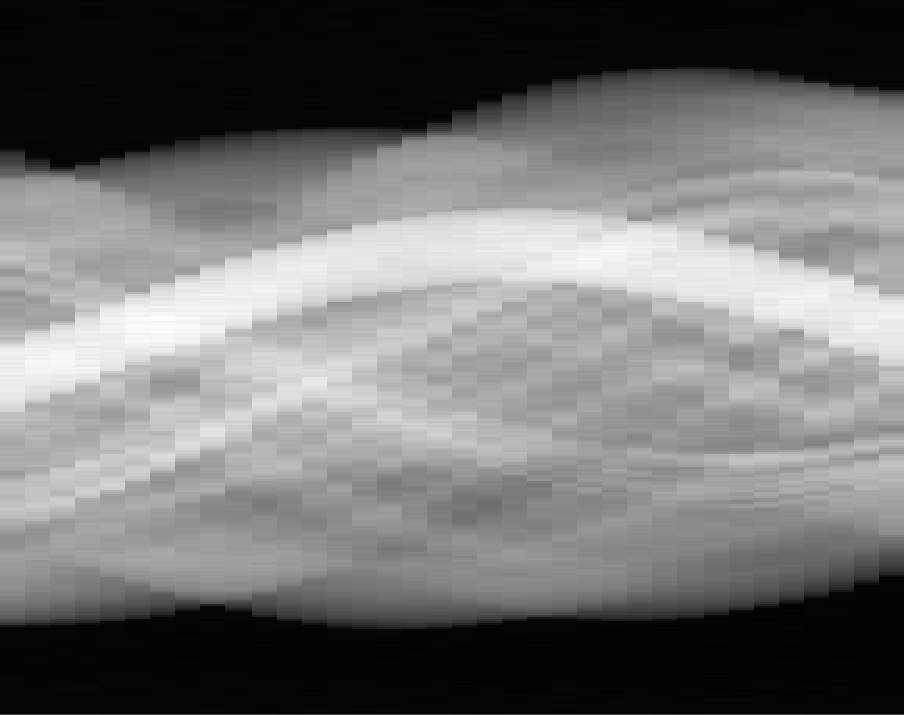}
  \caption{CT data of a lotus }
  \label{subfig:lotus sino}
\end{subfigure}
\hfill
\hspace{0.025cm}
\begin{subfigure}{.32\textwidth}
    \centering
    \includegraphics[width=\textwidth,height=\textwidth]{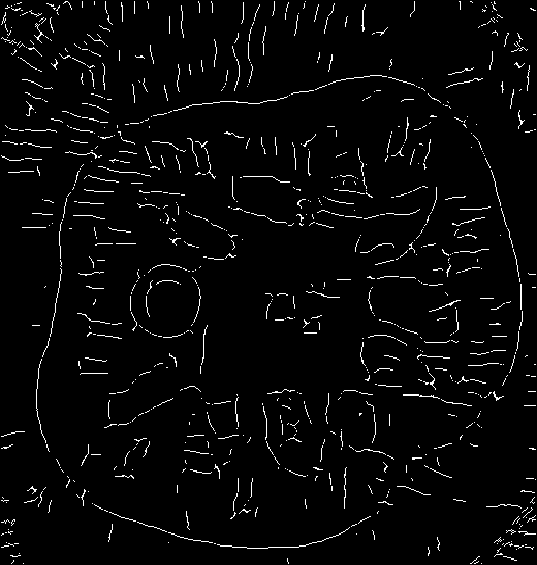}
    \caption{FBP-grad: edge map}
  \label{subfig:lotus fbp edge}
\end{subfigure}
\hfill
\hspace{0.025cm}
\begin{subfigure}{.32\textwidth}
    \centering
    \includegraphics[width=\textwidth,height=\textwidth]{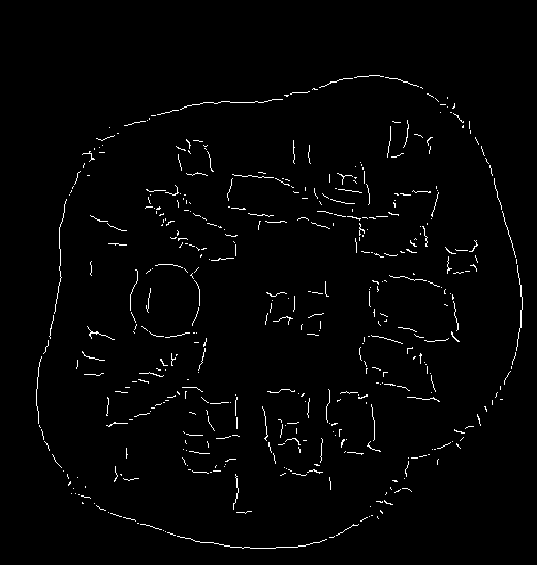}
    \caption{grad: edge map}
  \label{subfig:lotus log edge}
\end{subfigure}
\medskip

\caption{\textsc{Canny edge detection from the lotus data set.}
Rebinned CT data of a lotus root (\subref{subfig:lotus sino}) (cf. \cite{bubba2016tomographic}) and the corresponding FBP reconstruction (\subref{subfig:lotus fbp}) from  $36$ evenly distributed angles in $[0,\pi)$. Magnitude of the smooth gradient map $\abs{\nabla  U_{\rm grad} \astx \signal}$  computed using the FBP algorithm (\subref{subfig:lotus fbp gradient}) and using our approach (\subref{subfig:lotus log gradient}). The corresponding edge detection results using the Canny algorithm  are shown in (\subref{subfig:lotus fbp edge}) and (\subref{subfig:lotus log edge}), respectively.}\label{fig:lotus}
\end{figure}

In our last experiment we present edge detection results for real  CT data of a lotus \cite{bubba2016tomographic}. Note that similar reconstructions were presented in \cite{frikel2021combining}. The lotus data have been rebinned and downsampled to $2 N_s +1= 739$ signed distances and $N_\ph  = 36$ directions. The Gaussian gradient feature map was computed in two ways: First, by applying FBP to filtered CT data with the data filter \eqref{eq:radon derivative filter}, cf. Figure \ref{subfig:lotus fbp gradient}. Second, by using our approach \eqref{eq:feature-rec} with $\mu=0$ and $\lambda=0.01$, and by applying 50 iterations of the FISTA algorithm, cf. Figure \ref{subfig:lotus log gradient}. The resulting image size is $521 \times 521$. The standard deviation for the Gaussian smoothing was chosen as $\alpha = 6$ and for the Canny edge detection we used the  same lower threshold $0.1$ and upper threshold $0.15$. In order to calculate binary edge maps (shown in Figures \ref{subfig:lotus fbp edge} and \ref{subfig:lotus log edge}), we used the Canny edge detector (cf. \cite{canny1986computational}) in combination with the pointwise magnitude of the Gaussian gradient maps $\abs{\nabla U_{\rm grad} \astx f }$. Again, it was observed that the calculation of the Gaussian gradient map using our approach leads to more reliable edge detection results.

 \section{Conclusion}\label{sec:outlook}

In this paper, we proposed  a framework for the  reconstruction of features maps  directly from  incomplete tomographic data, without the need of reconstructing the tomographic image $\signal $ first. Here,  a features map  refers to the convolution $U \astx \signal$ where $U$ is a given convolution kernel and $\signal$ is the underlying object. Starting from the forward convolution identity for the Radon transform, we introduced a variational model for feature reconstruction, that is formulated using the discrepancy term $ \norm{\Ro_\Theta h - u_\Theta  \asts  \data_\Theta }_2^2 $ and a general regularizer $\reg( h )$. In contrast to existing approaches, such as \cite{louis2,louis1}, our framework does not require full data and, due to the variational formulation, also offers a flexible way for integrating a priori information about the feature map into the reconstruction. In several numerical experiments, we have illustrated that our method can outperform classical feature reconstruction schemes, especially, if the CT data is incomplete. Although we mostly focused on reconstruction of feature maps that are used for edge detetion purposes, our framework can be adapted for a wide range of problems. A rigorous convergence analysis of the presented scheme remains an open issue. Another directions of further research may may include the extension of the proposed approach to non-sparse, non-convolutional features and generalization to other types of tomography problems. Also, multiple feature reconstruction (similar to the method \cite{jointmarkus,zangerl2020multi}) seems to be an interesting future research direction.  

\vspace{6pt} 

\section*{Acknowledgment}
The work of the M.H was supported by Austrian Science Fund (FWF) project P 30747-N32. The contribution by S.G is part of a project that has received funding from the European Union’s Horizon 2020 research and innovation programme under the Marie Skłodowska-Curie grant agreement No 847476. The views and opinions expressed herein do not necessarily reflect those of the European Commission.

\bibliography{literatur.bib}

\end{document}